\journal{Economics Letters}
\newtheorem{proposition}{Proposition}
\theoremstyle{definition}
\patchcmd{\emailauthor}{(#2)}{}{}{}
\patchcmd{\urlauthor}{(#2)}{}{}{}
\newcommand\encircle[1]{%
	\tikz[baseline=(X.base)] 
	\node (X) [draw, shape=circle, inner sep=0] {\strut #1};}
\newcommand\ensquare[1]{%
	\tikz[baseline=(X.base)] 
	\node (X) [draw, shape=rectangle, inner sep=.1cm] {\strut #1};}
\newcommand{\rk}{\mathit{rk}} 
\begin{document}

\begin{frontmatter}
\title{The Losses from Integration in Matching Markets can be Large}
\author{Josu\'e Ortega}
%\cortext[mycorrespondingauthor]{Corresponding author}
\ead{ortega@zew.de}
\address{Center for European Economic Research (ZEW), Mannheim, Germany.}

\begin{abstract}
	Although the integration of two-sided matching markets using stable mechanisms generates expected gains from integration, I show that there are worst case scenarios in which these are negative. The losses from integration can be large enough that the average rank of an agent's spouse decreases by 37.5\% of the length of their preference list in any stable matching mechanism.
\end{abstract}

\begin{keyword}
	social integration \sep large matching markets \sep spouse ranking \sep assignment schemes \sep replica economies.\\
	{\it JEL Codes:} C78.
\end{keyword}
\end{frontmatter}
	
	\newpage
	\setcounter{footnote}{0}

\section{Introduction}
\label{sec:introduction}

The theory of two-sided matching has been helpful to analyze and improve the allocation of goods without money, such as the assignment of students to public schools or of kidneys available for transplantation to patients in need. In particular, a recent branch of the literature has focused on understanding why small matching markets struggle to integrate into large centralized clearinghouses. For example, \cite{ashlagi2014} and \citet{agarwal2018} document that American hospitals refrain to disclose their available kidneys for transplantation to centralized clearinghouses and even refrain to join them at all (62\% of kidney exchange transplants are within hospital transplants that are not facilitated by a centralized clearinghouse). Similarly, \cite{ekmekci2018} show that schools have incentives to conduct their own independent admissions instead of joining a centralized admission system. 

In both cases, the failure of integration has important consequences: if all hospitals were fully integrated into a centralized kidney exchange program, they would be able to conduct more transplants and thus would save more lives. Similarly, the integration of school admissions into a unified process would reduce both the number of vacant seats in schools and the number of unassigned students, while at the same time reducing the severe racial segregation within school districts \citep{hafalir2017}.

The aforementioned articles document several reasons of why matching markets fail to integrate when doing so could generate welfare gains (and estimations suggest that in some markets the gains from integration are substantial, as in the case of kidney exchange in US \citep{agarwal2018}). In this paper, I show theoretically that, in worst-case scenarios, matching markets may fail to integrate because doing so would inevitable generate welfare losses in a well-defined sense. This worst-case result is somewhat surprising since the integration of two-sided matching markets never harms more people than those it benefits ex-post and generates expected welfare gains for both sides of the market \citep{ortega2018}.

In my framework, there are $\kappa$ disjoint Gale-Shapley matching problems, each of size $n$. Agents are first assigned a spouse within their community so that the matching in each market is stable. Then all communities integrate as a society and a stable matching is realized again, now allowing agents to marry any agent of opposite gender in the society. I measure the individual gains from integration as the difference in the spouse ranking before and after integration occurs. This is, if an agent obtains his 5th best partner before integration and his 2nd best after integration, there is a difference in ranking of +3 and this agent experiences gains from integration. The spouse ranking is a common welfare measure in matching markets \citep{wilson1972, pittel1989,ashlagi2017}. 

Similarly, the total gains from integration of a society are the sum of the gains from integration of each of its agents. The sum of spouse rankings is a natural welfare measure previously studied by \cite{knuth1976,knuth1996} and \cite{knoblauch2009} to analyze the properties of random markets. Thus, the sum of individual gains from integration is an intuitive way to measure the degree to which integration is perceived as favourable.

I show with an example that in specific matching markets the gains from integration can be negative. By providing an algorithm on how to create replicas of a matching problem, I show that the losses from integration can be made large enough that on average the ranking of each agent's spouse decreases by $\frac{3}{8}$ of the length of their preference list, equivalent to a welfare reduction of 37.5\%.

\section{Model}
\label{sec:model}

A {\it community} $C$ is a set of $n$ men and $n$ women. There are $\kappa$ disjoint communities. Given a set of communities, the {\it society} $S$ is the set of all communities. A {\it population} $P \subseteq S$ is a set of communities in $S$. $M^P$ and $W^P$ denote the sets of men and women in each community $C \in P$. $C_i$ denotes the $i$-th community and $m_j^i$ denotes the $j$-th man of community $C_i$. I use the same notation for women $w_j^i$ and omit indexes whenever no confusion arises.

Each man $m$ (resp. woman $w$) has strict preferences over the set of all women in the society $W^S$ (resp. men $M^S$). I assume for exposition that all potential partners are desirable. I write $w \succ_m  w'$ to denote that $m$ prefers $w$ to $w'$. Given a population $P$, I call $\succ_P \coloneqq (\succ_x)_{x \in (M^P \cup W^P)}$ a {\it preference profile} of the agents belonging to the communities in $P$. An {\it extended matching problem} (EMP) is a triple $(M^S,W^S;\succ_S)$. 

Given a population $P$, a {\it matching} $\mu:(W^P \cup M^P) \mapsto (W^P \cup M^P$) is a function of order two ($\mu^2(x)=x)$ such that if $\mu(m) \neq m$, then $\mu(m) \in W^P$, and if $\mu(w) \neq w$, then $w \in M^P$. An agent is matched to himself if he remains unmatched. A {\it matching scheme} $\sigma: (M^S \cup W^S) \times 2^S \mapsto (M^S \cup W^S)$ is a function that specifies a matching $\sigma(\cdot,P)$ for every $P \in 2^S$, i.e. $\sigma (\cdot,P): (M^P \cup W^P) \mapsto (M^P \cup W^P)$. Here we will only be interested in $\sigma(\cdot, C)$ and $\sigma(\cdot, S)$ which denote the matching before and after integration occurs. Matching schemes are analogous to the concept of assignment schemes in cooperative game theory \citep{sprumont1990}.

A matching $\mu:(W^P \cup M^P) \mapsto (W^P \cup M^P)$ is stable if there is no man $m \in M^P$ and no woman $w \in W^P$ that are not married to each other $(\mu(m) \neq w)$ such that $w  \succ_m \mu(m)$ and $m \succ_w \mu(w)$. Any such pair ($m,w$) is called a blocking pair. A matching scheme $\sigma$ is stable if the matching $\sigma(\cdot,P)$ is stable in every population $P \subseteq S$. 

The {\it raw rank} of a woman $w$ in the preference order of a man $m$ over all potential spouses in the society is defined by $\rk_m(w) \coloneqq \left\vert \{ w' \in W^S: w' \succcurlyeq_m w \} \right\vert$. Similarly, $\rk_w(m)$ denotes the raw rank of $m$ in the preference order of $w$. Note that the length of each preference order is $\kappa n$, i.e. each agent ranks $\kappa n$ different potential partners. As it will be more convenient, we will use the \emph{percentile ranking} instead of the raw ranking, which is defined as $\hat \rk_m(w) \coloneqq \frac{\rk_m(w)}{\kappa n}$. Thus, if $\rk_m(w)=\kappa n$, then $\hat \rk_m(w)=1$, meaning that the fraction of potential partners who are equally or more preferred than $w$ by $m$ is $1$, i.e. everybody.

The {\it gains from integration} for agent $x$ under the matching scheme $\sigma$ are defined as $\gamma_x (\sigma) \coloneqq \hat \rk_M(\sigma(x,C)) - \hat \rk_M(\sigma(x,S))$. The {\it total gains from integration} are given by $\Gamma (\sigma) \coloneqq \sum_{x \in S} \gamma_x$. If these are negative, we speak of the total losses from integration. The average gains from integration are denoted by $\overline \Gamma (\sigma)=\frac{\Gamma (\sigma)}{2 \kappa n}$. Note that $\overline \Gamma (\sigma) \in (-1,1)$.

\section{Results}
\label{sec:results}

Unfortunately, integration produces total welfare losses in some EMPs if the matching before and after integration occurs is stable.

\begin{proposition}
	\label{thm:negative}
	There exist EMPs such that $\overline \Gamma(\sigma)<0$ in any stable matching scheme. 
\end{proposition}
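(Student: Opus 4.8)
The plan is to prove the statement by exhibiting a single explicit EMP --- it is enough to take two communities ($\kappa=2$) of a small size $n$ --- and to engineer its preferences so that the stable matching is \emph{unique} both before and after integration; once uniqueness is in hand the proposition reduces to adding up the changes in percentile rank and checking the sum is negative. The quantifier ``in any stable matching scheme'' is exactly what forces this design choice: one unfavourable stable matching would not suffice, so the unfavourable outcome has to be the \emph{only} one available, both inside each community and in the integrated society.

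The mechanism I would build in is a displacement cascade triggered by a single cross-community poaching. Inside $C_2$ I would pick preferences so that its unique internal stable matching pairs $m_k^2$ with $w_k^2$, with each $m_k^2$ ranking $w_k^2$ first and $w_{k+1}^2$ second and each $w_{k+1}^2$ ranking $m_k^2$ just above $m_{k+1}^2$; this keeps $C_2$'s internal matching stable (indeed men- and women-proposing deferred acceptance agree, giving uniqueness) but makes it ``fragile'', in the sense that removing $w_1^2$ sends $m_1^2$ onto $w_2^2$, which bumps $m_2^2$ onto $w_3^2$, and so on down the chain. Community $C_1$ I would make rigid and content: $m_k^1$ and $w_k^1$ rank each other first within $C_1$, so its internal matching is $\{m_k^1 w_k^1\}$ with everyone near the top of their list. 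The one crack is that, \emph{globally}, $m_1^1$ ranks the outside woman $w_1^2$ first and $w_1^2$ ranks $m_1^1$ ahead of every man of $C_2$.

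After integration $m_1^1$ and $w_1^2$ rank each other first in the whole society, so they are matched in every stable matching; deleting them and repeatedly invoking the fact that a mutually top-ranked pair must be matched in every stable matching, I would peel off $m_k^1\leftrightarrow w_k^1$ for $k\ge 2$, then $m_1^2\leftrightarrow w_2^2$, then $m_2^2\leftrightarrow w_3^2$, and so on, until the last man of $C_2$ is left with the woman $w_1^1$ that $m_1^1$ abandoned --- and I would place her at the very bottom of his list and him at the very bottom of hers. This pins the post-integration stable matching down uniquely (the same iterated argument, applied inside each community, gives pre-integration uniqueness). Tallying $\gamma_x$: the poacher $m_1^1$ and all the women of $C_2$ each gain a single rank, i.e.\ $1/(\kappa n)$; the men $m_1^2,\dots,m_{n-1}^2$ each lose a single rank; and the two jilted agents --- the displaced last man of $C_2$ and the woman $w_1^1$ --- each lose close to a full list length. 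The losses outweigh the gains, so $\overline\Gamma(\sigma)<0$; for $n=2$ the arithmetic gives exactly $-1/8$.

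The main obstacle is this second step: simultaneously (i) keeping $C_2$'s internal matching stable, (ii) planting exactly the one cross-community blocking pair needed to start the cascade without creating any others, and (iii) routing the cascade so it terminates by dumping an agent deep inside $C_1$, all while each ``peel off a mutually-top pair'' move is legitimate at the stage where it is used. Checking that no stray blocking pair survives, for every agent and in both the pre- and post-integration markets, is where the care goes; the rank bookkeeping afterwards is routine.
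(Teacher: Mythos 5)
Your proposal is correct and follows essentially the same strategy as the paper: exhibit an explicit EMP with $\kappa=n=2$ whose stable matching scheme is unique (pinned down by iteratively matching mutually top-ranked pairs, before and after integration), and verify that the $+1$ raw-rank gains of the poacher and the poached women are outweighed by the near-maximal losses of the displaced agents. The paper's example is the symmetric variant in which \emph{each} community's top man poaches the other community's top woman (yielding $\overline{\Gamma}(\sigma)=-3/16$ rather than your $-1/8$), but either construction establishes the proposition.
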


\begin{proof} Let $n=2$ and $\kappa=2$. Consider the following preferences (preferences that remain unspecified are irrelevant; the partner obtained before integration appears in a circle, whereas the partner obtained after integration appears in a square):
\begin{center}
	$ \begin{matrix*}[l]
m_1^1: \ensquare{$w_1^2$} \succ \encircle{$w_2^1$}  	 &\quad w_1^1: \ensquare{$m_1^2$} \succ \encircle{$m_2^1$}  \\
m_2^1: \encircle{$w_1^1$} \succ w_1^2 \succ \ensquare{$w_2^1$}   &\quad w_2^1: \encircle{$m_1^1$} \succ m_1^2 \succ \ensquare{$m_2^1$}   \\ \\

m_1^2: \ensquare{$w_1^1$} \succ \encircle{$w_2^2$}   &\quad w_1^2: \ensquare{$m_1^1$} \succ \encircle{$m_2^2$}   \\
m_2^2: \encircle{$w_1^2$} \succ w_1^1 \succ w_2^1 \succ \ensquare{$w_2^2$} &\quad w_2^2: \encircle{$m_1^2$} \succ m_1^1 \succ m_2^1 \succ \ensquare{$m_2^2$}
	\end{matrix*}$
\end{center}

In any stable matching  $\sigma(m_1^i, C_i)=w_2^i$ and $\sigma(m_2^i, C_i)=w_1^i$ for $i \in \{1,2\}$ because all agents get their most preferred partner inside their own community and thus the matching $\sigma(\cdot, C_i)$ is stable. In any stable matching $\sigma(m_1^i, S)=w_1^j$ for $i \in \{1,2\}, i \neq j$ because $m_1^i$ and $w_1^j$ are the best possible spouses for each other, and thus none of these agents could be in any blocking pair. Then $\sigma(m_2^1,S)=w_2^1$ because both get their best achievable partner among those remaining. Finally, $\sigma(m_2^2,S)=w_2^2$ so that both agents obtain their worst partner in $S$, yet no other partner prefers them over their current partner, and thus the matching $\sigma(\cdot, S)$ is stable. We conclude that the matching scheme $\sigma$ is stable.

The gains from integration (in raw ranking terms) for men in this unique stable matching scheme are $-3$ rank positions, and by symmetry of the problem the gains from integration (in raw ranking terms) for women are $-3$ as well. Normalizing them in percentile terms, we have that $\Gamma_S(\sigma)=-\frac{6}{4}$ and $\overline \Gamma_S(\sigma)=-\frac{3}{16}$.
\end{proof}

What drives the existence of welfare losses from integration in the previous example is that agents' preferences are constructed so that those who benefit from integration (exactly half of the society) experience very small welfare gains (+1 in a raw rank measure). In contrast, those harmed by integration (the other half of the society) find themselves married to much less preferred partners. Neither the fact that exactly half of the society experiences losses from integration nor the particular way in which preferences were constructed are a coincidence, and we will return to this example in Proposition \ref{thm:lowerbound}.

A natural follow-up question is how big can total losses from integration be. A trivial upper bound is easy to compute.

\begin{proposition}
	\label{thm:maximum}
	$\overline \Gamma(\sigma) \geq -  \frac{1}{2}+\frac{1}{\kappa n}$ in any EMP and any stable matching scheme.
\end{proposition}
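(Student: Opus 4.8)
The plan is to decompose the total gains along the couples formed by the pre‑integration matching and to bound the joint gains of the two members of each couple. Since every community has $n$ men and $n$ women and all partners are desirable, any stable matching inside a community is a perfect matching; hence $\sigma(\cdot,C)$ matches every agent and partitions $M^S\cup W^S$ into exactly $\kappa n$ couples $(m,w)$ with $\sigma(m,C)=w$ and $\sigma(w,C)=m$. (For $\kappa n=1$ there is nothing to integrate, so assume $\kappa n\ge 2$.)

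The key step is to show that for each such couple $(m,w)$ one has $\gamma_m(\sigma)+\gamma_w(\sigma)\ge \frac{2}{\kappa n}-1$. Because $\sigma(\cdot,S)$ is stable, the pair $(m,w)$ does not block it, so $m$ weakly prefers $\sigma(m,S)$ to $w=\sigma(m,C)$, or $w$ weakly prefers $\sigma(w,S)$ to $m=\sigma(w,C)$; by symmetry it suffices to treat the first case. If $\sigma(m,S)=w$ then $\sigma(w,S)=m$ as well, the couple stays together, and $\gamma_m=\gamma_w=0\ge \frac{2}{\kappa n}-1$. Otherwise $m$ \emph{strictly} prefers $\sigma(m,S)$ to $w$, so $\rk_m(\sigma(m,S))\le \rk_m(\sigma(m,C))-1$, whence $\gamma_m=\hat\rk_m(\sigma(m,C))-\hat\rk_m(\sigma(m,S))\ge \frac{1}{\kappa n}$; adding the crude bound $\gamma_w=\hat\rk_w(\sigma(w,C))-\hat\rk_w(\sigma(w,S))\ge \frac{1}{\kappa n}-1$ (valid since every percentile rank lies in $[\frac{1}{\kappa n},1]$) gives $\gamma_m+\gamma_w\ge \frac{2}{\kappa n}-1$.

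Summing this inequality over the $\kappa n$ couples then finishes the argument: $\Gamma(\sigma)=\sum_{x\in S}\gamma_x=\sum_{(m,w)}(\gamma_m+\gamma_w)\ge \kappa n\,(\tfrac{2}{\kappa n}-1)=2-\kappa n$, and dividing by $2\kappa n$ yields $\overline\Gamma(\sigma)\ge \frac{2-\kappa n}{2\kappa n}=-\frac{1}{2}+\frac{1}{\kappa n}$.

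The only delicate point is the couple inequality: the idea is that whenever an agent does not retain its pre‑integration partner, the weak preference supplied by the non‑blocking condition is actually strict, and this strictness contributes the extra $\frac{1}{\kappa n}$ that upgrades the naive per‑agent bound $\gamma_x\ge \frac{1}{\kappa n}-1$ — which on its own only gives $\overline\Gamma\ge -1+\frac{1}{\kappa n}$ — to the constant stated in the proposition. Everything else is bookkeeping, which is why this counts as an easy upper bound on the losses.
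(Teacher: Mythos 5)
Your proof is correct, but it takes a more self-contained route than the paper. The paper's proof imports Proposition 2 of \cite{ortega2018} --- that the number of agents hurt by integration is weakly smaller than the number who benefit --- and then combines it with the crude per-agent bounds (a beneficiary gains at least $1$ raw rank, a harmed agent loses at most $\kappa n-1$ raw ranks) to get $\Gamma(\sigma)\geq -\kappa n+2$. You instead re-derive the needed structure from scratch: you partition the society into the $\kappa n$ pre-integration couples and observe that, by the non-blocking condition for $\sigma(\cdot,S)$, in every couple that breaks up at least one member \emph{strictly} improves, which yields the per-couple bound $\gamma_m+\gamma_w\geq \frac{2}{\kappa n}-1$ and hence the same $\Gamma(\sigma)\geq 2-\kappa n$. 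Your couple-level argument is essentially the mechanism underlying the cited result from \cite{ortega2018}, so the mathematics is the same at bottom; what your version buys is a self-contained, fully explicit proof that does not rely on an external proposition, and it also handles cleanly the bookkeeping issue (agents who are unaffected, or markets where fewer than half the agents are harmed) that the paper's displayed inequality glosses over by tacitly assuming exactly $\kappa n$ agents on each side of the ledger. Both arguments are valid and give the identical bound.
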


\begin{proof}
	From Proposition 2 in \cite{ortega2018}, we know that the number of agents who get hurt by integration is weakly lower than the number of agents who benefit from integration. The smallest gain from integration is (+1) in raw ranking terms. The largest loss from integration is $(\kappa n-1)$ in raw ranking terms. Thus
	\begin{equation}
		\label{eq:maximum}
		\Gamma (\sigma) \geq \kappa n(\frac{-\kappa n+1}{\kappa n})+ \kappa n (\frac{1}{\kappa n})= - \kappa n+2
	\end{equation} 
	
	and $\overline \Gamma (\sigma) \geq - \frac{1}{2} +\frac{1}{\kappa n}$. The term $\frac{1}{\kappa n}$ becomes negligible in large markets and thus the inequality asymptotically becomes $\overline \Gamma (\sigma) \geq - \frac{1}{2}$.
\end{proof}

Expression \ref{eq:maximum} establishes that the total losses from integration could be such that the ranking of agents' spouses goes down by almost 50\% of the length of their preference order, a very big loss which becomes even larger as the size of the market grows. However, the above upper bound is not achievable, because in the case of $n=2$ and $\kappa=2$ the greatest lower bound for $\Gamma(\sigma)$ is $-\frac{3}{16}$ instead of $-\frac{1}{4}$. In the following Proposition, I provide an attainable bound on the losses from integration, which shows that the losses of integration can be substantial and do not vanish in large matching markets. 

\begin{proposition}
	\label{thm:lowerbound}
	For any $n$ and $k$ even, there exists an EMP where $\overline \Gamma(\sigma) \leq -\frac{3}{8}  + \frac{3}{4\kappa n}$ in any stable matching scheme.
\end{proposition}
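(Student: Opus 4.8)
The plan is to generalize the $n=\kappa=2$ example from Proposition~\ref{thm:negative} by a replication construction that drives the average losses from $-\tfrac{3}{16}$ down towards $-\tfrac38$. The key structural feature to preserve is the one highlighted after Proposition~\ref{thm:negative}: split the society into a ``winning half'' and a ``losing half'', arrange preferences so that inside each community everyone obtains their top choice (making $\sigma(\cdot,C)$ trivially stable with raw rank $1$ for everybody), and after integration force the winners to gain only $+1$ raw rank position while the losers are pushed arbitrarily far down their lists. If the losers can be sent to raw rank roughly $\tfrac34 \kappa n$ after integration, then the per-agent percentile change averages $\tfrac12\cdot\tfrac{1-1}{\kappa n} + \tfrac12\cdot\tfrac{1 - \tfrac34\kappa n}{\kappa n} \to -\tfrac38$, which is the target bound (with the $+\tfrac{3}{4\kappa n}$ correction coming from the finite-size adjustment on the loser term).

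Concretely, I would build the EMP as follows. Take $\kappa$ communities, each with $n$ men and $n$ women; fix $\kappa$ even so we can pair up communities into $\kappa/2$ ``couples'' of communities $(C_{2t-1}, C_{2t})$, each couple being a scaled-up copy of the two-community gadget in Proposition~\ref{thm:negative}. Within a couple, designate the first $n/2$ men and women of each community as the ``winners'' and the remaining $n/2$ as the ``losers'' (this is where $n$ even is used). Winners' preferences: every winner in $C_{2t-1}$ ranks a distinct winner of $C_{2t}$ first and their pre-integration within-community partner second, exactly as $m_1^1, w_1^1$ in the example, so integration moves them from raw rank $2$ to raw rank $1$, a gain of $+1$. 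Losers' preferences are designed so that (i) before integration each loser tops their list with a within-community loser of the opposite sex (stability inside $C$), and (ii) after integration, once all winners are matched to winners across the couple and can never block, the only stable completion forces each loser down to a prescribed deep position; I would front-load each loser's list with the winners and losers that get ``taken'' by higher-priority matches so that, by a deferred-acceptance / rural-hospitals style argument, the unique stable outcome on the losers is the deep matching. The arithmetic target is to have each loser end at raw rank about $\tfrac34\kappa n$, the natural analogue of ``rank $4$ out of $4$'' in the base case once the bookkeeping is done; I expect the clean statement to be: winners gain $+1$, losers lose $\tfrac34\kappa n - \tfrac34$ in raw terms, giving $\Gamma(\sigma) = 2\kappa n \cdot \tfrac12 \cdot \tfrac1{\kappa n}\big(1 - (\tfrac34\kappa n - \tfrac34 + 1)\big)$ and hence $\overline\Gamma(\sigma) = -\tfrac38 + \tfrac{3}{4\kappa n}$.

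The main obstacle is verifying uniqueness and stability of the post-integration matching $\sigma(\cdot,S)$ once the example is scaled up: in the $2\times 2$ case this was a short hand argument (``best for each other, so in no blocking pair; then the next pair takes its best remaining; etc.''), but with $n/2$ winners and $n/2$ losers per community and $\kappa/2$ community-couples there is genuine combinatorial work to (a) show the winner-winner matches across each couple form the top layer of any stable matching (this should follow because each such pair is mutually first-ranked, so by a standard iterated-domination argument they must be matched in every stable matching), and (b) show that after deleting the winners the induced sub-EMP on the losers has a unique stable matching equal to the intended deep matching. For (b) I would likely engineer the losers' preferences to be ``aligned'' in the sense of Clark / Eeckhout uniqueness conditions (a common ordering that makes the stable matching unique), or alternatively make the losers' sub-market a chain in which deferred acceptance from either side yields the same outcome. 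A secondary check is that the construction genuinely yields raw rank $\approx \tfrac34\kappa n$ and not something smaller, which may require padding each loser's preference list with an appropriate number of ``irrelevant'' partners above their final match; I would tune these paddings so the percentile bound comes out exactly to $-\tfrac38 + \tfrac{3}{4\kappa n}$, and then note, as the paper's abstract anticipates, that this tends to $-\tfrac38$ in large markets.
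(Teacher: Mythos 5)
Your proposal follows essentially the same route as the paper's proof: replicate the $n=\kappa=2$ gadget of Proposition \ref{thm:negative}, keep half the society as ``winners'' who are mutually first-ranked across communities (so they gain exactly $+1$ in raw rank and can never belong to a blocking pair), and engineer the losers' sub-market so that its unique stable matching sends them deep into their lists, with the winners serving as padding above each loser's final match. The paper does this by cloning the original society and giving the losers lexicographic (i.e.\ aligned, common-order) preferences over one another, which is exactly one of the two uniqueness devices you name; the target arithmetic and the bound $-\tfrac38+\tfrac{3}{4\kappa n}$ agree.

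One specific element of your plan, however, is inconsistent with your own uniqueness mechanism and should be replaced by what the paper actually does. You propose to place \emph{every} loser at raw rank roughly $\tfrac34\kappa n$ by padding each loser's list, while ensuring uniqueness through a common ordering over the losers. These two goals conflict: if all loser women share the order $m_1\succ m_2\succ\cdots$ over loser men and the stable matching is the induced assortative one, then every loser woman except $m_1$'s partner prefers $m_1$ to her own match, so $m_1$ cannot rank \emph{any} loser woman above his partner without creating a blocking pair; his match is therefore pinned at rank $\tfrac{\kappa n}{2}+1$, and inductively the $j$-th loser's match is pinned at rank $\tfrac{\kappa n}{2}+j$. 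The aligned structure thus necessarily \emph{spreads} the losers' post-integration ranks over $\tfrac{\kappa n}{2}+1,\dots,\kappa n$ rather than equalizing them at $\tfrac34\kappa n$ --- precisely the range the paper reports (raw losses from $\tfrac{\kappa n}{2}$ to $\kappa n-1$). Since the average of that spread is $\tfrac{3\kappa n}{4}+\tfrac12$, the total still comes out to $\Gamma(\sigma)=\tfrac32-\tfrac34\kappa n$ and $\overline\Gamma(\sigma)=-\tfrac38+\tfrac{3}{4\kappa n}$, so your bound survives; but the ``pad everyone to a common depth'' step must be dropped, and the losers' common order has to be global across your community couples rather than couple-by-couple (otherwise each loser can only be pushed down by the $n$ losers of his own couple and the attainable bound degrades toward $-\tfrac14$ as $\kappa$ grows).
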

\begin{proof}
I show that one can generate such an EMP by cloning the society in the proof of Proposition \ref{thm:negative}. Cloning a matching problem was first done by \cite{irving1986} to establish an upper bound on the number of stable marriages. The approach of creating clones or replicas of an economy has been used to establish other properties of matchings in large markets \citep{wooders1998,kojima2010,azevedo2015,liu2016,he2018} and has a long tradition in economics since its use by \cite{debreu1963} to prove the equivalence of the core and the competitive equilibrium in large markets. Clones are denoted by a prime symbol (i.e. the clone of agent $m_i^j$ is denoted by $m_{i'}^j$) and the set of clone men and women are denoted by $M'$ and $W'$, respectively.

I use the notation $M^+$ and $W^+$ to denote the men and women who benefit from integration. Similarly, the notation $M^-$ and $W^-$ denotes the men and women harmed by integration. The replication of the EMP will be such that the clone of each agent in $M^+ \cup  W^+$ will also be in $M'^+ \cup W'^+$, and the clone of every agent in $M^- \cup W^-$ will be in $M'^- \cup W'^-$. 

The preferences of the clones in $M'^+ \cup W'^+$ will be isomorphic to those of the original agents in $M^+ \cup W^+$, meaning that if an agent $m_i^j$ prefers $w_l^h$ over $w_f^p$, then agent $m_{i'}^j$ prefers $w_{l'}^h$ over $w_{f'}^p$. The preferences of the clones in $M'^+ \cup W'^+$ will be such that
\begin{enumerate}
	\item Their most preferred partner will be isomorphic to the one of the original agent.
	\item Women will prefer any agents in $M^+ \cup M'^+$ over any agent in $M^- \cup M'^-$ (the same is true for men). 
	\item The preferences of women (resp. men) over $M^- \cup M'^-$ (resp. $W^- \cup W'^-$) are lexicographic according to the following criteria:
    	\begin{enumerate}
	    	\item Agent $x^i_j$ is preferred to agent $x^h_p$ if $i < j$ (agents from communities with smaller indexes are preferred),
	    	\item If both agents belong to the same community, agent $x_i^h$ is preferred to agent $x_j^h$ if $i<j$, 
	    	\item If both agents have the same sub- and superindices, then the original is preferred to the clone.
    	\end{enumerate}
\end{enumerate}

The EMP below shows the corresponding replica of the EMP in the proof of Proposition \ref{thm:negative}. 
\begin{center}
	$ \begin{matrix*}[l]
	m_1^1: \ensquare{$w_1^2$} \succ \encircle{$w_2^1$}  	 &\quad w_1^1: \ensquare{$m_1^2$} \succ \encircle{$m_2^1$}  \\
	m_{1'}^1: \ensquare{$w_{1'}^2$} \succ \encircle{$ w_{2'}^1$}  	 &\quad w_{1'}^1: \ensquare{$m_{1'}^2$} \succ \encircle{$m_{2'}^1$} \\
	m_2^1: \encircle{$w_1^1$} \succ W^+ \succ \ensquare{$w_2^1$}   &\quad w_2^1: \encircle{$m_1^1$} \succ M^+ \succ \ensquare{$m_2^1$}   \\
	m_{2'}^1: \encircle{$w_{1'}^1$} \succ W^+ \succ w_2^1 \succ \ensquare{$w_{2'}^1$}   &\quad w_{2'}^1: \encircle{$m_1^1$} \succ M^+ \succ m_1^2 \succ \ensquare{$m_{2'}^1$}   \\ \\
	
	m_1^2: \ensquare{$w_1^1$} \succ \encircle{$w_2^2$}   &\quad w_1^2: \ensquare{$m_1^1$} \succ \encircle{$m_2^2$}   \\
	m_{1'}^2: \ensquare{$w_{1'}^1$} \succ \encircle{$w_{2'}^2$}   &\quad w_{1'}^2: \ensquare{$m_{2'}^1$} \succ \encircle{$m_{1'}^2$}   \\
	m_2^2: \encircle{$w_1^2$} \succ W^+ \succ w_2^1 \succ w_{2'}^1 \succ \ensquare{$w_2^2$} &\quad w_2^2: \encircle{$m_1^2$} \succ M^+ \succ m_1^2 \succ  m_{2'}^1 \succ \ensquare{$m_2^2$}\\
	m_{2'}^2: \encircle{$w_{1'}^2$} \succ W^+ \succ w_2^1 \succ w_{2'}^1 \succ w_2^2 \succ \ensquare{$w_{2'}^2$} &\quad w_{2'}^2: \encircle{$m_{1'}^2$} \succ M^+ \succ m_1^2 \succ  m_{2'}^1 \succ m_2^2 \succ \ensquare{$m_{2'}^2$}
	\end{matrix*}$
\end{center}

An identical argument to the one we used in the proof of Proposition \ref{thm:negative} shows that the EMP above has a unique stable matching scheme, presented above using circles and squares. In such unique stable matching scheme, $\kappa n$ agents gain (+1) from integration in raw ranking terms, and $\kappa n$ agents have welfare losses ranging from $\kappa n / 2$ to $\kappa n - 1$ also in raw ranking terms.

The total gains from integration are therefore given by
\begin{eqnarray*}
\Gamma (\sigma) &=& \kappa n \left( \frac{1}{\kappa n}\right) - \frac{2}{\kappa n} [ \sum_{i=0}^{\frac{\kappa n}{2} -1} \frac{\kappa n}{2}  + i]\\
&=& 1 -  \frac{2}{\kappa n} [ \frac{(\kappa n)^2}{4}  + \sum_{i=1}^{\frac{\kappa n}{2} -1}  i]\\
&=& 1 -  [ \frac{\kappa n}{2}  + \frac{\kappa n}{4} - \frac{1}{2}]\\
&=& \frac{3}{2}  - \frac{3}{4} \kappa n
\end{eqnarray*}

And thus $\overline{\Gamma} (\sigma) = - \frac{3}{8} + \frac{3}{4 \kappa n}$. The constant term $\frac{3}{4 \kappa n}$ becomes negligible in large markets and thus the total losses of integration can be made asymptotically equivalent to $- \frac{3}{8}$, i.e. a $37.5\%$ of the size of the agents' preference lists.
\end{proof}

\section{Conclusion}

We can construct matching problems such that the sum of the individual gains from integration from all its agents is negative. We can make the problem arbitrarily large to show that the welfare losses remain significant even in large markets.

In this article I have assumed that all communities have the same number of men and women, an unrealistic assumption which I have introduced to simplify the analysis. Allowing communities to have different gender composition would reduce the magnitude of welfare losses. To observe this, suppose a male-dominated community merges with a female-dominated one. In both communities some agents, who were unmatched before integration, are now able to find a partner in a more gender-balanced society. Those agents would experience welfare gains which are not covered in my analysis. I leave a more comprehensive study of integration of unbalanced matching markets for further research.

\section*{Acknowledgements}

I acknowledge excellent research assistance by Alexander Sauer and Carlo Stern, and helpful comments from Yan Long, Gabriel Ziegler and an anonymous referee. Sarah Fox proofread the paper.

\section*{References}
\bibliographystyle{ecta}

\end{document}